\title{Datalog Materialisation in Distributed RDF Stores with Dynamic Data Exchange}
\titlerunning{Distributed Datalog Materialisation}
\author{
    Temitope Ajileye \and
    Boris Motik \and
    Ian Horrocks
}
\authorrunning{T. Ajileye et al.}
\institute{Department of Computer Science, University of Oxford, Oxford, UK}
\begin{document}

\maketitle

\begin{abstract}
Several centralised RDF systems support datalog reasoning by precomputing and
storing all logically implied triples using the well-known \emph{semina\"{i}ve
algorithm}. Large RDF datasets often exceed the capacity of centralised RDF
systems, and a common solution is to distribute the datasets in a cluster of
shared-nothing servers. While numerous distributed query answering techniques
are known, distributed semina\"{i}ve evaluation of arbitrary datalog rules is
less understood. In fact, most distributed RDF stores either support no
reasoning or can handle only limited datalog fragments. In this paper we extend
the \emph{dynamic data exchange} approach for distributed query answering by
\citet{pmnh18dynamic-rdf-stores} to a reasoning algorithm that can handle
arbitrary rules while preserving important properties such as nonrepetition of
inferences. We also show empirically that our algorithm scales well to very
large RDF datasets.

\end{abstract}

\section{Introduction}\label{sec:introduction}

Reasoning with datalog rules over RDF data plays a key role on the Semantic
Web. Datalog can capture the structure of an application domain using if-then
rules, and OWL 2 RL ontologies can be translated into datalog rules. Datalog
reasoning is supported in several RDF management systems such as Oracle's
database \cite{DBLP:conf/semweb/KolovskiWE10},
GraphDB,\footnote{\url{http://graphdb.ontotext.com/}} Amazon
Neptune,\footnote{http://aws.amazon.com/neptune/} VLog
\cite{DBLP:conf/aaai/UrbaniJK16}, and
RDFox.\footnote{\url{http://www.cs.ox.ac.uk/isg/tools/RDFox/}} All of these
system use a \emph{materialisation} approach to reasoning, where all facts
implied by the dataset and the rules are precomputed and stored in a
preprocessing step. This is usually done using the \emph{semina{\"i}ve
algorithm} \cite{abiteboul95foundation}, which ensures the \emph{nonrepetition
property}: no rule is applied to the same facts more than once.

Many RDF management systems are \emph{centralised} in that they store and
process all data on a single server. To scale to workloads that cannot fit into
a single server, it is common to distribute the data in a cluster of
interconnected, shared-nothing servers and use a distributed query answering
strategy. \citet{DBLP:journals/pvldb/AbdelazizHKK17} present a comprehensive
survey of 22 approaches to distributed query answering, and
\citet{pmnh18dynamic-rdf-stores} discuss several additional systems. There is
considerable variation between these approaches: some use data replication,
some compute joins on dedicated server, others use distributed join algorithms,
and many leverage big data frameworks such as Hadoop and Spark for data storage
and query processing. In contrast, distributed datalog materialisation is less
well understood, and it is more technically challenging. Freshly derived facts
must be stored so that they can be taken into account in future rule
applications, but without repeating derivations. Moreover, synchronisation
between rule applications should be reduced to allow parallel computation.

Several theoretical frameworks developed in the 90s aim to address these
questions \cite{DBLP:journals/jlp/GangulyST92, DBLP:journals/tkde/WolfsonO93,
DBLP:conf/pods/SeibL91, DBLP:conf/pdis/ShaoBH91, DBLP:journals/tkde/ZhangWC95}.
As we discuss in more detail in Section~\ref{sec:motivation}, they constrain
the rules so that that each server performs only certain rule applications, and
they send the derived facts to all servers where these facts could participate
in further rule applications. Thus, same facts can be stored on more than one
server, which severely limits the scalability of such systems.

The Semantic Web community has recently developed several RDF-specific
approaches. a number of them are hardwired to fixed datalog rules, such as RDFS
\cite{DBLP:conf/semweb/WeaverH09, DBLP:conf/semweb/KaoudiMK08} or so-called
\emph{ter Horst fragment} \cite{DBLP:journals/ws/UrbaniKMHB12,
DBLP:conf/ipps/GuWWYH15}. Focusing on a fixed set of rules considerably
simplifies the technical problems. PLogSPARK \cite{DBLP:conf/wise/WuLWYWZ16}
and SPOWL \cite{DBLP:conf/sigmod/LiuM17} handle arbitrary rules, but they do
not seem to use semina{\"i}ve evaluation. Finally, several probabilistic
algorithms aim to handle large datasets \cite{DBLP:journals/ws/OrenKASTH09,
DBLP:conf/sigmod/LiuM17}, but these approaches are approximate and are thus
unsuitable for many applications. Distributed SociaLite
\cite{DBLP:journals/pvldb/SeoPSL13} is the only system we are aware of that
provides semina{\"i}ve evaluation for arbitrary datalog rules. It uses a custom
graph model, but the approach can readily be adapted to RDF. Moreover, its
rules must explicitly encode the communication and storage strategy, which
increases complexity.

In this paper we present a new technique for distributed materialisation of
arbitrary datalog rules. Unlike SociaLite, we do not require any distributed
processing hints in the rules. We also do not duplicate any data and thus
remove an obstacle to scalability. Our approach is based on the earlier work by
\citet{pmnh18dynamic-rdf-stores} on distributed query answering using
\emph{dynamic data exchange}, from which it inherits several important
properties. First, inferences that can be made within a single server do not
require any communication; coupled with careful data partitioning, this can
very effectively minimise network communication. Second, rule evaluation is
completely asynchronous, which promotes parallelism. This, however, introduces
a complication: to ensure nonrepetition of inferences, we must be able to
partially order rule derivations across the cluster, which we achieve using
\emph{Lamport timestamps} \cite{DBLP:journals/cacm/Lamport78}. We discuss the
motivation and the novelty in more detail in Section~\ref{sec:motivation}, and
in Section~\ref{sec:algorithm} we present the approach formally.

We have implemented our approach in a new prototype system called DMAT. In
Section~\ref{sec:evaluation} we present the results of an empirical evaluation.
We compared DMAT with WebPIE \cite{DBLP:journals/ws/UrbaniKMHB12}, investigated
how it scales with increasing data loads, and compared it with RDFox to
understand the impact of distribution on concurrency. Our results show that
DMAT outperforms WebPIE by an order of magnitude (albeit with some differences
in the setting), and that it can handle well increasing data loads; moreover,
DMAT's performance is comparable to that of RDFox on a single server. Our
algorithms are thus a welcome addition to the techniques for implementing truly
scalable semantic systems.

\section{Preliminaries}\label{sec:preliminaries}

We now recapitulate the syntax and the semantics of RDF and datalog. A
\emph{constant} (aka \emph{RDF term}) is an IRI, a blank node, or a
literal. Datalog constants can be arbitrary sets, but we are limiting them to RDF terms because this work in within the context of RDF stores. A \emph{term} is a constant or a \emph{variable}. An \emph{atom} $a$
(aka \emph{triple pattern}) has the form ${a = \triple{t_s}{t_p}{t_o}}$ over
terms $t_s$ (\emph{subject}), $t_p$ (\emph{predicate}), and $t_o$
(\emph{object}). A \emph{fact} (aka \emph{triple}) is an variable-free atom. A
\emph{dataset} (aka \emph{RDF Graph}) is a finite set of facts.

We define the set of \emph{positions} as ${\Pi = \{ s, p, o \}}$. Then, for ${a
= \triple{t_s}{t_p}{t_o}}$ and ${\pi \in \Pi}$, we define ${\term{a}{\pi} =
t_\pi}$---that is, $\term{a}{\pi}$ is the term that occurs in $a$ at position
$\pi$. A \emph{substitution} $\sigma$ is a partial function that maps finitely
many variables to constants. For $\alpha$ a term or an atom, $\alpha\sigma$ is
the result of replacing with $\sigma(x)$ each occurrence of a variable $x$ in
$\alpha$ on which $\sigma$ is defined.

A \emph{query} $Q$ is a conjunction of atoms ${a_1 \wedge \dots \wedge a_n}$.
Substitution $\sigma$ is an \textit{answer} to $Q$ on a dataset $I$ if
${a_i\sigma \in I}$ holds for each ${1 \leq i \leq n}$.

A datalog \textit{rule} $r$ is an implication of the form ${h \leftarrow b_1
\wedge \dots \wedge b_n}$, where $h$ is the \emph{head} atom, all $b_i$ are
\emph{body} atoms, and each variable occurring in $h$ also occurs in some
$b_i$. A datalog \textit{program} is a finite set of rules. Let $I$ be a
dataset. The result of applying $r$ to $I$ is ${r(I) = I \cup \{ h\sigma \mid
\sigma \text{ is an answer to } b_1 \wedge \dots \wedge b_n \text{ on } I \}}$.
For $P$ a program, let ${P(I) = \bigcup_{r \in P} r(I)}$; let ${P^0(I) = I}$;
and let ${P^{i+1}(I) = P(P^i(I))}$ for ${i \geq 0}$. Then, ${P^\infty(I) =
\bigcup_{i \geq 0} P^i(I)}$ is the \emph{materialisation} of $P$ on $I$. This
paper deals with the problem of computing $P^\infty(I)$ where $I$ is
distributed across of a cluster of servers such that each fact is stored in
precisely one server.

\section{Motivation and Related Work}\label{sec:motivation}

We can compute $P^\infty(I)$ using the definition in
Section~\ref{sec:preliminaries}: we evaluate the body of each rule ${r \in P}$
as a query over $I$ and instantiate the head of $r$ for each query answer, we
eliminate duplicate facts, and we repeat the process until no new facts can be
derived. However, since ${P^i(I) \subseteq P^{i+1}(I)}$ holds for each ${i \geq
0}$, such a \emph{na{\"i}ve} approach repeats in each round of rule
applications the work from all previous rounds. The \emph{sem\"{i}naive
strategy} \cite{abiteboul95foundation} avoids this problem: when matching a
rule $r$ in round $i+1$, at least one body atom of $r$ must be matched to a
fact derived in round $i$. We next discuss now these ideas are implemented in
the existing approaches, and then we present an overview of our approach.

\subsection{Related Approaches}

Several approaches to distributed reasoning partition rule applications across
servers. For example, to evaluate rule ${\triple{x}{R}{z} \leftarrow
\triple{x}{R}{y} \wedge \triple{y}{R}{z}}$ on $\ell$ servers, one can let each
server $i$ with ${1 \leq i \leq \ell}$ evaluate rule
\begin{align}
    \triple{x}{R}{z} \leftarrow \triple{x}{R}{y} \wedge \triple{y}{R}{z} \wedge h(y) = i,   \label{eq:part-rule}
\end{align}
where $h(y)$ is a \emph{partition function} that maps values of $y$ to integers
between $1$ and $\ell$. If $h$ is uniform, then each server receives roughly
the same fraction of the workload, which benefits parallelisation. However,
since a triple of the form $\triple{s}{R}{o}$ can match either atom in the body
of \eqref{eq:part-rule}, each such triple must be replicated to servers $h(s)$
and $h(o)$ so they can participate in rule applications. Based on this idea,
\citet{DBLP:journals/jlp/GangulyST92} show how to handle general datalog;
\citet{DBLP:journals/tkde/ZhangWC95} study different partition functions;
\citet{DBLP:conf/pods/SeibL91} identify programs and partition functions where
no replication of derived facts is needed; \citet{DBLP:conf/pdis/ShaoBH91}
further break rules in segments; and \citet{DBLP:journals/tkde/WolfsonO93}
replicate all facts to all servers. The primary motivation behind these
approaches seems to be parallelisation of computation, which explains why the
high rates of data replication were not seen as a problem. However, high
replication rates are not acceptable when data distribution is used to increase
a system's capacity.

Materialisation can also be implemented without any data replication. First,
one must select a triple partitioning strategy: a common approach is to assign
each $\triple{s}{p}{o}$ to server $h(s)$ for a suitable hash function $h$, and
another popular option is to use a distributed file system (e.g., HDFS) and
thus leverage its partitioning mechanism. Then, one can evaluate the rules
using a suitable distributed query algorithm and distribute the newly derived
triples using the partitioning strategy. These principles were used to realise
RDFS reasoning \cite{DBLP:conf/semweb/WeaverH09, DBLP:conf/semweb/KaoudiMK08},
and they are also implicitly present in approaches implemented on top of big
data frameworks such as Hadoop \cite{DBLP:journals/ws/UrbaniKMHB12} and Spark
\cite{DBLP:conf/ipps/GuWWYH15, DBLP:conf/wise/WuLWYWZ16,
DBLP:conf/sigmod/LiuM17}. However, most of these can handle only fixed rule
sets, which considerably simplifies algorithm design. For example,
semina{\"i}ve evaluation is not needed in the RDFS fragment since the
nonrepetition of inferences can be ensured by evaluating rules in a particular
order \cite{DBLP:conf/ipps/GuWWYH15}. PLogSPARK \cite{DBLP:conf/wise/WuLWYWZ16}
and SPOWL \cite{DBLP:conf/sigmod/LiuM17} handle arbitrary rules using the
na{\"i}ve algorithm, which can be detrimental when programs are moderately
complex.

Distributed SociaLite \cite{DBLP:journals/pvldb/SeoPSL13} is the only system
known to us that implements distributed semina{\"i}ve evaluation for general
datalog. It requires users to explicitly specify the data distribution strategy
and communication patterns. For example, by writing a fact $R(a,b)$ as
$R[a](b)$, one specifies that the fact is to be stored on server $h(a)$ for
some function $h$. Rule \eqref{eq:part-rule} can then be written in SociaLite
as ${R[x](z) \leftarrow R[x](y) \wedge R[y](z)}$, specifying that the rule
should be evaluated by sending each fact $R[a](b)$ to server $h(b)$, joining such
facts with $R[b](c)$, and sending the resulting facts $R[a](c)$ to server
$h(a)$. While the evaluation of some rules is parallelised, servers must
synchronise after each round of rule application.

\subsection{Dynamic Data Exchange for Query Answering}\label{sec:motivation:dynamic}

Before describing our approach to distributed datalog materialisation, we next
recapitulate the earlier work by \citet{pmnh18dynamic-rdf-stores} on
distributed query answering using \emph{dynamic data exchange}, which provides
the foundation for this paper.

This approach to query answering assumes that all triples are partitioned into
$\ell$ mutually disjoint datasets ${I_1, \dots, I_\ell}$, with $\ell$ being the
number of servers. The main objectives of dynamic exchange are to reduce
communication and eliminate synchronisation between servers. To achieve the
former goal, each server $k$ maintains three \emph{occurrence mappings}
$\occur{k}{s}$, $\occur{k}{p}$, and $\occur{k}{o}$. For each resource $r$
occurring in $I_k$, set $\occur{k}{s}(r)$ contains all servers where $r$ occurs
in the subject position, and $\occur{k}{p}(r)$ and $\occur{k}{o}(r)$ provide
analogous information for the predicate and object positions. To understand how
occurrences are used, consider evaluating ${Q = \triple{x}{R}{y} \wedge
\triple{y}{R}{z}}$ over datasets ${I_1 = \{ \triple{a}{R}{b}, \triple{b}{R}{c}
\}}$ and ${I_2 = \{ \triple{b}{R}{d}, \triple{d}{R}{e} \}}$. Both servers
evaluate $Q$ using index nested loop joins. Thus, server 1 evaluates
$\triple{x}{R}{y}$ over $I_1$, which produces a \emph{partial} answer
${\sigma_1 = \{ x \mapsto a, y \mapsto b \}}$. Server 1 then evaluates
${\triple{y}{R}{z}\sigma_1 = \triple{b}{R}{z}}$ over $I_1$ and thus obtains one
full answer ${\sigma_2 = \{ x \mapsto a, y \mapsto b, z \mapsto c \}}$. To see
whether $\triple{b}{R}{z}$ can be matched on other servers, server 1 consults
its occurrence mappings for all resources in the atom. Since ${\occur{1}{s}(b)
= \occur{1}{p}(R) = \{ 1,2 \}}$, server 1 sends the partial answer $\sigma_1$
to server 2, telling it to continue matching the query. After receiving
$\sigma_1$, server 2 matches atom $\triple{b}{R}{z}$ in $I_2$ to obtain another
full answer ${\sigma_3 = \{ x \mapsto a, y \mapsto b, z \mapsto d \}}$.
However, server 2 also evaluates $\triple{x}{R}{y}$ over $I_2$, obtaining
partial answer ${\sigma_4 = \{ x \mapsto b, y \mapsto d \}}$, and it consults
its occurrences to determine which servers can match ${\triple{y}{R}{z}\sigma_4
= \triple{d}{R}{z}}$. Since ${\occur{2}{s}(d) = \{ 2 \}}$, server 2 knows it is
the only one that can match this atom, so it proceeds without any communication
and computes ${\sigma_5 = \{ x \mapsto b, y \mapsto d, z \mapsto e \}}$.

This strategy has several important benefits. First, all answers that can be
produced within a single server, such as $\sigma_5$ in our example, are
produced without any communication. Second, the location of every resource is
explicitly recorded, rather than computed using a fixed rule (e.g., a hash
function). We use this to partition a graph based on its structural properties
and thus collocate highly interconnected resources. Combined with the first
property, this can significantly reduce network communication. Third, the
system is completely asynchronous: when server 1 sends $\sigma_1$ to server 2,
it does not need to to wait for server 2 to finish; and server 2 can process
$\sigma_1$ whenever it wishes. This eliminates the need for synchronisation
between servers, which is beneficial for parallelisation.

\subsection{Our Contribution}\label{sec:motivation:contribution}

In this paper we extend the dynamic data exchange framework to datalog
materialisation. We draw inspiration from the work by
\citet{mnpho14parallel-materialisation-RDFox} on parallelising datalog
materialisation in centralised, shared memory systems. Intuitively, their
algorithm considers each triple in the dataset, identifies each rule and body
atom that can be matched to the triple, and evaluates the rest of the rule as a
query. This approach is amenable to parallelisation since distinct processors
can simultaneously process distinct triples; since the number of triples is
generally very large, the likelihood of workload skew between processors is
very low.

Our distributed materialisation algorithm is based on the same general
principle: each server matches the rules to locally stored triples, but the
resulting queries are evaluated using dynamic data exchange. This approach
requires no synchronisation between servers, and it reduces communication in
the same way as described in Section~\ref{sec:motivation:dynamic}. We thus
expect our approach to exhibit the same good properties as the approach to
query answering by \citet{pmnh18dynamic-rdf-stores}.

The lack of synchronisation between servers introduces a technical
complication. Remember that, to avoid repeating derivations, at least one body
atom in a rule must be matched to a fact derived in the previous round of rule
application. However, due to asynchronous rule application, there is no global
notion of a rule application round (unlike, say, SociaLite). A na{\"i}ve
solution would be to associate each fact with a timestamp recording when the
fact has derived, so the order of fact derivation could then be recovered by
comparing timestamps. However, this would require maintaining a high coherence
of server clocks in the cluster, which is unrealistic in practice. Instead, we
use Lamport timestsamps \cite{DBLP:journals/cacm/Lamport78}, which provide a
cheap way of determining a partial order of events across a cluster. We
describe this technique in more detail in Section~\ref{sec:algorithm}.

Another complication is due to the fact that the occurrence mappings stored in
the servers may need to be updated due to the derivation of new triples. For
completeness, it is critical that all servers are updated before such triples
are used in rule applications. Our solution to this problem is fully
asynchronous, which again benefits parallelisation.

Finally, since no central coordinator keeps track of the state of the
computation of different servers, detecting when the system as a whole can
terminate is not straightforward. We solve this problem using a well-known
termination detection algorithm based on token passing
\cite{DBLP:journals/ipl/DijkstraFG83}.

\section{Distributed Materialisation Algorithm}\label{sec:algorithm}

We now present our distributed materialisation algorithm and prove it to be
correct. We present the algorithm in steps. In
Section~\ref{sec:algorithm:timestamps} we discuss data structures that the
servers use to store their triples and implement Lamport timestamps. In
Section~\ref{sec:algorithm:occurrence} we discuss the occurrence mappings. In
Section~\ref{sec:algorithm:communication} we discuss the communication
infrastructure and the message types used. In
Section~\ref{sec:algorithm:algorithm} we present the algorithm's pseudocode. In
Section~\ref{sec:algorithm:termination} we discuss how to detect termination.
Finally, in Section~\ref{sec:algorithm:correctness} we argue about the
algorithm's correctness.

\subsection{Adding Lamport Timestamps to Triples}\label{sec:algorithm:timestamps}

As already mentioned, to avoid repeating derivations, our algorithm uses
Lamport timestamps \cite{DBLP:journals/cacm/Lamport78}, which is a technique
for establishing a causal order of events in a distributed system. If all
servers in the system could share a global clock, we could trivially associate
each event with a global timestamp, which would allow us to recover the
`happens-before' relationship between events by comparing timestamps. However,
maintaining a precise global clock in a distributed system is technically very
challenging, and Lamport timestamps provide a much simpler solution. In
particular, each event is annotated an integer timestamp in a way that
guarantees the following property ($\ast$):
\begin{quote}
    if there is any way for an event $A$ to possibly influence an event $B$,
    then the timestamp of $A$ is strictly smaller then the timestamp of $B$.
\end{quote}
To achieve this, each server maintains a local integer clock that is
incremented each time an event of interest occurs, which clearly ensures
($\ast$) if $A$ and $B$ occur within one server. Now assume that $A$ occurs in
server $s_1$ and $B$ occurs in $s_2$; clearly, $A$ can influence $B$ only if
$s_1$ sends a message to $s_2$, and $s_2$ processes this message before event
$B$ takes place. To ensure that holds ($\ast$) in such a case, server $s_1$
includes its current clock value into the message it sends to $s_2$; moreover,
when processing this message, server $s_2$ updates its local clock to the
maximum of the message clock and the local clock, and then increments the local
clock. Thus, when $B$ happens after receiving the message, it is guaranteed to
have a timestamp that is larger than the timestamp of $A$.

To map this idea to datalog materialisation, a derivation of a fact corresponds
to the notion of an event, and using a fact to derive another fact corresponds
to the `influences' notion. Thus, we associates facts with integer timestamps.

More precisely, each server $k$ in the cluster maintains an integer $C_k$
called the \emph{local clock}, a set $I_k$ of the derived triples, and a
partial function ${T_k : I_k \rightarrow \mathbb{N}}$ that associates triples
with natural numbers. Function $T_k$ is partial because timestamps are not
assigned to facts upon derivation, but during timestamp synchronisation. Before
the algorithm is started, $C_k$ must be initialised to zero, and all input
facts (i.e., the facts given by the user) partitioned to server $k$ should be
loaded into $I_k$ and assigned a timestamp of zero.

To capture formally how timestamps are used during query evaluation, we
introduce the notion of an \emph{annotated query} as a conjunction of the form
\begin{align}
    Q = a_1^{\bowtie_1} \wedge \dots \wedge a_n^{\bowtie_n},    \label{eq:annotated-query}
\end{align}
where each $a_i$ is an atom and each $\bowtie_i$ is symbol $<$ or symbol
$\leq$; each $a_i^{\bowtie_i}$ is called an \emph{annotated atom}. An annotated
query requires a timestamp to be evaluated. More precisely, a substitution
$\sigma$ is an answer to $Q$ on $I_k$ and $T_k$ w.r.t.\ a timestamp $\tau$ if
(i)~$\sigma$ is an answer to the `ordinary' query ${a_1 \wedge \dots \wedge
a_n}$ on $I_k$, and (ii)~for each ${1 \leq i \leq n}$, the value of $T_k$ is
defined for $a_i\sigma$ and it satisfies ${T_k(a_i\sigma) \bowtie \tau}$. For
example, let $Q$, $I$, and $T$ be as follows, and let ${\tau = 2}$:
\begin{displaymath}
\begin{array}{@{}r@{\;}l@{}}
    Q   & = \triple{x}{R}{y}^< \wedge \triple{y}{S}{z}^\leq \qquad I = \{ \triple{a}{R}{b}, \triple{b}{S}{c}, \triple{b}{S}{d}, \triple{b}{S}{e} \} \\
    T   & = \{ \triple{a}{R}{b} \mapsto 1, \; \triple{b}{S}{c} \mapsto 2, \; \triple{b}{S}{d} \mapsto 3 \} \\
\end{array}
\end{displaymath}
Then, ${\sigma_1 = \{ x \mapsto a, y \mapsto b, z \mapsto c \}}$ is an answer
to $Q$ on $I$ and $T$ w.r.t.\ $\tau$, whereas ${\sigma_2 = \{ x \mapsto a, y
\mapsto b, z \mapsto d \}}$ and ${\sigma_3 = \{ x \mapsto a, y \mapsto b, z
\mapsto e \}}$ are not. $\sigma_2$ is excluded because $ T(\triple{b}{S}{d}) \geq 2$ and $\sigma_3$ is excluded because  the timestamp of $\triple{b}{S}{e}$ is undefined.

To incorporate this notion into our algorithm, we assume that each server can
evaluate a single annotated atom. Specifically, given an annotated
$a^{\bowtie}$, a timestamp $\tau$, and a substitution $\sigma$, server $k$ can
call ${\textsc{Evaluate}(a^{\bowtie}, \tau, I_k, T_k, \sigma)}$. The call
returns each substitution $\rho$ defined over the variables in $a$ and $\sigma$
such that ${\sigma \subseteq \rho}$ holds, ${a\rho \in I_k}$ holds, and $T_k$
is defied on $a\rho$ and it satisfies ${T(a\rho) \bowtie \tau}$. In other
words, $\textsc{Evaluate}$ matches $a^{\bowtie}$ in $I_k$ and $T_k$ w.r.t.\
$\tau$ and it returns each extension of $\sigma$ that agrees with
$a^{\bowtie}$. For efficiency, server $k$ should index the facts in $I_k$; any
RDF indexing scheme can be used, and one can modify index lookup to simply skip
over facts whose timestamps do not match $\tau$.

Finally, we describe how rule matching is mapped to answering annotated
queries. Let $P$ be a datalog program to be materialised. Given a fact $f$,
function ${\textsc{MatchRules}(f, P)}$ considers each rule ${h \leftarrow b_1
\wedge \dots \wedge b_n \in P}$ and each body atom $b_p$ with ${1 \leq p \leq
n}$, and, for each substitution $\sigma$ over the variables of $b_p$ where ${f
= b_p\sigma}$, it returns ${(\sigma, b_p, Q, h)}$ where $Q$ is the annotated
query
\begin{align}
    b_1^< \wedge \dots \wedge b_{p-1}^< \wedge b_{p+1}^\leq \wedge \dots \wedge b_n^\leq. \label{eq:annotated-query-for-p}
\end{align}

Intuitively, $\textsc{MatchRules}$ identifies each rule and each \emph{pivot}
body atom $b_p$ that can be matched to $f$ via substitution $\sigma$. This
$\sigma$ will be extended to all body atoms of the rule by matching all
remaining atoms in nested loops using function $\textsc{Evaluate}$. The
annotations in \eqref{eq:annotated-query-for-p} specify how to match the
remaining atoms without repetition: facts matched to atoms before (resp.\
after) the pivot must have timestamps strictly smaller (resp.\ smaller or
equal) than the timestamp of $f$. As is usual in query evaluation, the atoms of
\eqref{eq:annotated-query-for-p} may need to be reordered to obtain an
efficient query plan. This can be achieved using any known technique, and
further discussion of this issue is out of scope of this paper.

\subsection{Occurrence Mappings}\label{sec:algorithm:occurrence}

To decide whether rule matching may need to proceed on other servers, each
server $k$ must store indexes $\mu_{k,s}$, $\mu_{k,p}$, and $\mu_{k,o}$ called
\emph{occurrence mappings} that map resources to sets of server IDs. To ensure
scalability, $\mu_{k,s}$, $\mu_{k,p}$, and $\mu_{k,o}$ need not be defined on
all resources: if, say, $\mu_{k,s}$ is not defined on resource $r$, we will
assume that $r$ can occur on any server. However, these mappings will need to
be correct during algorithm's execution: if server $I_k$ contains a resource
$r$ (in any position), and if $r$ occurs on some other server $j$ in position
$\pi$, then $\mu_{k,\pi}$ must be defined on $r$ and it must contain $j$.
Moreover, all servers will have to know the locations of all resources
occurring in the heads of the rules in $P$.

Storing only partial occurrences at each server introduces a complication: when
a server processes a partial match $\sigma$ received from another server, its
local occurrence mappings may not cover some of the resources in $\sigma$.
\citet{pmnh18dynamic-rdf-stores} solve this by accompanying each partial match
$\sigma$ with a vector ${\pmb \lambda = \lambda_s, \lambda_p, \lambda_o}$ of
\emph{partial occurrences}. Whenever a server extends $\sigma$ by matching an
atom, it also records in $\pmb \lambda$ its local occurrences for each resource
added to $\sigma$ so that this information can be propagated to subsequent
servers.

Occurrence mappings are initialised on each server $k$ for each resource that
initially occurs in $I_k$, but they may need to be updated as fresh triples are
derived. To ensure that the occurrences correctly reflect the distribution of
resources at all times, occurrence mappings of all servers must be updated
\emph{before} a triple can be added to the set of derived triples of the target
server.

Our algorithm must decide where to store each freshly derived triple. It is
common practice in distributed RDF systems to store all triples with the same
subject on the same server. This is beneficial since it allows subject--subject
joins---the most common type of join in practice---to be answered without any
communication. We follow this well-established practice and ensure that the
derived triples are grouped by subject. Consequently, we require that
$\mu_{k,s}(r)$, whenever it is defined, contains exactly one server. Thus, to
decide where to store a derived triple, the server from the subject's
occurrences is used, and, if the subject occurrences are unavailable, then a
predetermined server is used.

\subsection{Communication Infrastructure and Message Types}\label{sec:algorithm:communication}

We assume that the servers can communicate asynchronously by passing messages.
That is, each server can call ${\textsc{Send}(m, d)}$ to send a message $m$ to
a destination server $d$. This function can return immediately, and the
receiver can processes the message later. Also, our core algorithm is correct
as long as each sent message is processed eventually, regardless of whether the
messages are processed in the order in which they are sent between servers. We
next describe the two types of message used in our algorithm. The approach used
to detect termination can introduce other message types and might place
constraints on the order of message delivery; we discuss this in more detail in
Section~\ref{sec:algorithm:termination}.

Message $\PAR{i, \sigma, Q, h, \tau, \pmb \lambda}$ informs a server that
$\sigma$ is a partial match obtained by matching some fact with timestamp
$\tau$ to the body of a rule with head atom $h$; moreover, the remaining atoms
to be matched are given by an annotated query $Q$ starting from the atom with
index $i$. The partial occurrences for all resources mentioned in $\sigma$ are
recorded in $\pmb \lambda$.

Message $\FCT{f, D, k_h, \tau, \pmb \lambda}$ says that $f$ is a freshly
derived fact that should be stored at server $k_h$. Set $D$ contains servers
whose occurrences must be updated due to the addition of $f$. Timestamp $\tau$
corresponds to the time at which the message was sent. Finally, $\pmb \lambda$
are the partial occurrences for the resources in $f$.

\citet{pmnh18dynamic-rdf-stores} already observed $\mathsf{PAR}$ messages
correspond to partial join results so a large number of such messages can be
produced during query evaluation; moreover, to facilitate asynchronous
processing, the received $\mathsf{PAR}$ messages may need to be buffered on the
received server, which can easily require excessive space. They also presented a flow
control mechanism that can be used to restrict memory consumption at each server
without jeopardising completeness. This solution is directly applicable to our
problem as well, so we do not discuss it any further.

\subsection{The Algorithm}\label{sec:algorithm:algorithm}

With these definitions in mind, Algorithms~\ref{alg:materialisation-i}
and~\ref{alg:materialisation-ii} comprise our approach to distributed datalog
materialisation. Before starting, each server $k$ loads its subset of the input
RDF graph into $I_k$, sets the timestamp of each fact in $I_k$ to zero,
initialises $C_k$ to zero, and receives the copy of the program $P$ to be
materialised. The server then starts an arbitrary number of server threads, each
executing the $\textsc{ServerThread}$ function. Each thread repeatedly
processes either an unprocessed fact $f$ in $I_k$ or an unprocessed message
$m$; if both are available, they can be processed in arbitrary order.
Otherwise, the termination condition is processed as we discuss later in
Section~\ref{sec:algorithm:termination}.

Function $\textsc{Synchronise}$ updates the local clock $C_k$ with a timestamp
$\tau$. This must be done in a critical section (i.e., two threads should not
execute it simultaneously). The local clock is updated if ${C_k \leq \tau}$
holds; moreover, all facts in $I_k$ without a timestamp are are timestamped
with $C_k$ since they are derived before the event corresponding to $\tau$.
Assigning timestamps to facts in this way reduces the need for synchronising
access to $C_k$ between threads.

Function $\textsc{ProcessFact}$ kickstarts the matching of the rules to fact
$f$. After synchronising the clock with the timestamp of $f$, it simply calls
$\textsc{MatchRules}$ to identify all rules where one atom matches to $f$, and
forwards each match $\textsc{FinishMatch}$ to finish matching the pivot atom.

A $\mathsf{PAR}$ message is processed by matching atom $a_i^{\bowtie_i}$ of the
annotated query in $I_k$ and $T_k$ w.r.t.\ $\tau$, and forwarding each match to
$\textsc{FinishMatch}$.

A $\mathsf{FCT}$ message informs server $k$ that fact $f$ will be added to
the set $I_{k_h}$ of facts derived at server $k_h$. Set $D$ lists all remaining
servers that need to be informed of the addition, and partial occurrences $\pmb
\lambda$ are guaranteed to correctly reflect the occurrences of each resource
in $f$. Server $k$ updates its $\mu_{k,\pi}(r)$ by appending $\lambda_\pi(r)$
(line~\ref{process-message-fact-add-occurrence}). Since servers can
simultaneously process $\mathsf{FCT}$ messages, server $k$ adds to $D$ all
servers that might have been added to $\mu_{k,\pi}(r)$ since the point when
$\lambda_\pi(r)$ had been constructed
(line~\ref{process-message-fact-extend-D}), and it also updates
$\lambda_\pi(r)$ (line~\ref{process-message-fact-add-occurrence}). Finally, the
server adds $f$ to $I_k$ if $k$ is the last server
(line~\ref{process-message-add}), and otherwise it forwards the message to
another server $d$ form $D$.

Function $\textsc{FinishMatch}$ finishes matching atom $a_\mathit{last}$ by
(i)~extending $\pmb \lambda$ with the occurrences of all resources that might
be relevant for the remaining body atoms or the rule head, and (ii)~either
matching the next body atom or deriving the rule head. For the former task, the
algorithm identifies in line~\ref{finish-match-add-supporting-ocr} each
variable $x$ in the matched atom that either occurs in the rule head or in a
remaining atom, and for each $\pi$ it adds the occurrences of $x\sigma$ to
$\lambda_\pi$. Now if $Q$ has been matched completely
(line~\ref{finish-match-matched-start}), the server also ensures that the
partial occurrences are correctly defined for the resources occurring in the
rule head (lines~\ref{finish-match-h-start}--\ref{finish-match-h-end}), it
identifies the server $k_h$ that should receive the derived fact as described
in Section~\ref{sec:algorithm:occurrence}, it identifies the set $D$ of the
destination servers whose occurrences need to be updated, and it sends the
$\mathsf{FCT}$ message to one server from $D$. Otherwise, atom $a_{i+i}\sigma$
must be matched next. To determine the set $D$ of servers that could possibly
match this atom, server $k$ intersects the occurrences of each resource from
$a_{i+i}\sigma$ (line~\ref{finish-match-destination-list}) and sends a
$\mathsf{PAR}$ message to all servers in $D$.

\begin{algorithm}[tb] \caption{Distributed Materialisation Algorithm at Server $k$}\label{alg:materialisation-i} \begin{algorithmiccont}

\Function{ServerThread}{}
	\While{cannot terminate}
		\If{$I_k$ contains an unprocessed fact $f$, or a message $m$ is pending}
			\State \Call{ProcessFact}{$f$} or \Call{ProcessMessage}{$m$}, as appropriate
		\ElsIf{the termination token has been received}
			\State Process the termination token
		\EndIf
	\EndWhile
\EndFunction

\vspace{2ex}

\Function{ProcessFact}{$f$}                                                                                                                             \label{process-fact}
	\State \Call{Synchronise}{$T_k(f)$}                                                                                                                 \label{process-fact-sync}
	\For{\textbf{each} $(\sigma, a, Q, h) \in \textsc{MatchRules}(f,P)$}                                                                                \label{process-fact-match-rules}
		\State \Call{FinishMatch}{$0, \sigma, a, Q, h, T_k(f), \pmb \emptyset$}                                                                         \label{process-fact-finish-match}
	\EndFor
\EndFunction

\vspace{2ex}

\Function{ProcessMessage}{$\PAR{i, \sigma, Q, h, \tau, \pmb\lambda}$} where $Q = a_1^{\bowtie_1} \wedge \dots \wedge a_n^{\bowtie_n}$                   \label{process-message-partial}
	\State \Call{Synchronise}{$\tau$}                                                                                                                   \label{process-message-partial-sync}
	\For{\textbf{each} substitution $\sigma' \in \textsc{Evaluate}(a_i^{\bowtie_i}, \tau, I_k, T_k, \sigma)$}                                           \label{process-message-eval}
		\State \Call{FinishMatch}{$i, \sigma', a_i, Q, h, \tau, \pmb \lambda$}                                                                          \label{process-message-finish-match}
	\EndFor
\EndFunction

\vspace{2ex}

\Function{ProcessMessage}{$\FCT{f, D, k_h, \tau, \pmb\lambda}$}                                                                                         \label{process-message-fact}
	\State \Call{Synchronise}{$\tau$}                                                                                                                   \label{process-message-fact-sync}
	\For{\textbf{each} resource $r$ in $f$ and each position $\pi \in \Pi$}
		\State $D \defeq D \cup \big[\mu_{k,\pi}(r) \setminus \lambda_{\pi}(r)\big]$                                                                    \label{process-message-fact-extend-D}
		\State $\lambda_{\pi}(r) \defeq \mu_{k,\pi}(r) \defeq \lambda_{\pi}(r) \cup \mu_{k,\pi}(r)$                                                     \label{process-message-fact-add-occurrence}
	\EndFor
	\If{$D = \emptyset$}
		Add $f$ to $I_k$                                                                                                                                \label{process-message-add}
	\Else
 		\State Remove an element $d$ from $D$, preferring any element over $k_h$ if possible
 		\State \Call{Send}{$\FCT{f, D, k_h, C_k, \pmb\lambda},d$}                                                                                       \label{process-message-fact-gettime}
 	\EndIf
\EndFunction

\vspace{2ex}

\Function{Synchronise}{$\tau$} (must be executed in a critical section)                                                                                 \label{sync-server}
	\If{$C_k \leq \tau$}
		\For{\textbf{each} fact $f \in I_k$ such that $T_k$ is undefined on $f$}
            $T_k(f) \defeq C_k$                                                                                                                         \label{sync-server-add-fact}
		\EndFor
        \State $C_k \defeq \tau + 1$
	\EndIf
\EndFunction

\end{algorithmiccont}
\end{algorithm}

\begin{algorithm}[tb]
\caption{Distributed Materialisation Algorithm at Server $k$ (Continued)}\label{alg:materialisation-ii}
\begin{algorithmiccont}

\Function{FinishMatch}{$i, \sigma, a_\mathit{last}, Q, h, \tau, \pmb\lambda$} where $Q = a_1^{\bowtie_1} \wedge \dots \wedge a_n^{\bowtie_n}$           \label{finish-match}
    \For{\textbf{each} var.\ $x$ occurring in $a_\mathit{last}$ and in $h$ or $a_j$ with $j>i$, and each $\pi \in \Pi$}
        \State Extend $\lambda_{\pi}$ with the mapping $x\sigma \mapsto \mu_{k,\pi}(x\sigma)$                                                           \label{finish-match-add-supporting-ocr}
    \EndFor
	\If{$i = n$}                                                                                                                                        \label{finish-match-matched-start}
        \For{\textbf{each} resource $r$ occurring in $h$ and each $\pi \in \Pi$}                                                                        \label{finish-match-h-start}
            \State Extend $\lambda_{\pi}$ with the mapping $r \mapsto \mu_{k,\pi}(r)$
        \EndFor                                                                                                                                         \label{finish-match-h-end}
		\State $k_h \defeq$ the owner server for the derived fact
		\State $D \defeq \{ k_h \}$                                                                                                                     \label{finish-match-D-n}
		\For{\textbf{each} position $\pi \in \Pi$ and $r = \term{h\sigma}{\pi}$ where $k_h \not\in \lambda_{\pi}(r)$}
			\State Add $k_h$ to $\lambda_{\pi}(r)$
			\For{\textbf{each} $\pi' \in \Pi$}
				Add $\lambda_{\pi'}(r)$ to $D$
			\EndFor
		\EndFor
		\State Remove an element $d$ from $D$, preferring any element over $k_h$ if possible                                                            \label{finish-match-send-fact}
		\If{$d = k$}
			\Call{ProcessMessage}{$\FCT{h\sigma, D, k_h, C_k, \pmb\lambda}$}                                                                            \label{finish-match-process-FCT}
		\Else{}
			\Call{Send}{$\FCT{h\sigma, D, k_h, C_k, \pmb\lambda}, d$}                                                                                   \label{finish-match-send-FCT}
		\EndIf                                                                                                                                          \label{finish-match-matched-end}
	\Else
		\State $D \defeq$ the set of all servers                                                                                                        \label{finish-match-D-start}
		\For{\textbf{each} position $\pi \in \Pi$ where $\term{a_{i+1}\sigma}{\pi}$ is a resource $r$}
		    $D \defeq D \cap \lambda_{\pi}(r)$                                                                                                          \label{finish-match-destination-list}
		\EndFor                                                                                                                                         \label{finish-match-D-end}
		\For{\textbf{each} $d \in D$}
			\If{$d=k$}
				\Call{ProcessMessage}{$\PAR{i+1, \sigma, Q, h, \tau, \pmb\lambda}$}
			\Else{}
				\Call{Send}{$\PAR{i+1, \sigma, Q, h, \tau, \pmb\lambda}, d$}
			\EndIf
		\EndFor
	\EndIf
\EndFunction

\end{algorithmiccont}
\end{algorithm}

\subsection{Termination Detection}\label{sec:algorithm:termination}

Since no server has complete information about the progress of any other
server, detecting termination is nontrivial; however, we can reuse an existing
solution.

When messages between each pair of servers are guaranteed to be delivered in
order in which they are sent (as is the case in our implementation), one can
use Dijkstra's token ring algorithm \cite{DBLP:journals/ipl/DijkstraFG83},
which we summarise next. All servers in the cluster are numbered from $1$ to
$\ell$ and are arranged in a ring (i.e., server 1 comes after server $\ell$).
Each server can be black or white, and the servers will pass between them a
\emph{token} that can also be black or white. Initially, all servers are white
and server 1 has a white token. The algorithm proceeds as follows.
\begin{itemize}
    \item When server 1 has the token and it becomes idle (i.e., it has no
    pending work or messages), it sends a white token to the next server in the
    ring.
    
    \item When a server other than 1 has the token and it becomes idle, the
    server changes the token's colour to black if the server is itself black
    (and it leaves the token's colour unchanged otherwise); the server forwards
    the token to the next server in the ring; and the server changes its colour
    to white.

    \item A server $i$ turns black whenever it sends a message to a server ${j
    < i}$.
    
    \item All servers can terminate when server 1 receives a white token.
\end{itemize}

The Dijkstra--Scholten algorithm extends this approach to the case when the
order of message delivery cannot be guaranteed.

\subsection{Correctness}\label{sec:algorithm:correctness}

We next prove that our algorithm is correct and that it exhibits the
nonrepetition property. We present here only an outline of the correctness
argument, and we give the full proof in the supplementary material.

Let us fix a run of Algorithms~\ref{alg:materialisation-i}
and~\ref{alg:materialisation-ii} on some input. First, we show that Lamport
timestamps capture the causality of fact derivation in this run. To this end,
we introduce four event types relating to an arbitrary fact $f$. Event
$\mathsf{add}_k(f)$ occurs when $f$ is assigned a timestamp on server $k$ in
line~\ref{sync-server-add-fact}. Event $\mathsf{process}_k(f)$ occurs when
server $k$ starts processing a new fact in line~\ref{process-fact-sync}. Event
$\mathsf{PAR}_k(f,i)$ occurs when server $k$ completes
line~\ref{process-message-partial-sync} for a $\mathsf{PAR}$ message with index
$i$ originating from a call to $\textsc{MatchRules}$ on fact $f$. Finally,
event $\mathsf{FCT}_k(f)$ occurs when server $k$ completes
line~\ref{process-message-fact-sync} for a $\mathsf{FCT}$ message for fact $f$.
We write ${e_1 \rightsquigarrow e_2}$ if event $e_1$ occurs chronologically
before event $e_2$; this relation is clearly transitive and irreflexive.
Furthermore, each fact is stored and assigned a timestamp on just one server,
so we define $T(f)$ as $T_k(f)$ for the unique server $k$ that satisfies ${f
\in I_k}$. Lemma~\ref{lemma:ccc} then essentially says that the happens-before
relationship between facts and events on facts agrees with the timestamps assigned to the
facts.

\begin{restatable}{lemma}{lemmaccc}\label{lemma:ccc}
    In each run of the algorithm, for each server $k$, and all facts $f_1$ and
    $f_2$, we have ${T(f_1) < T(f_2)}$ whenever one of the following holds:
    \begin{itemize}
        \item ${\mathsf{PAR}_k(f_1,i) \rightsquigarrow \mathsf{add}_k(f_2)}$
        for some $i$,

        \item ${\mathsf{process}_k(f_1) \rightsquigarrow \mathsf{FCT}_k(f_2)}$,
        or
        
        \item ${\mathsf{PAR}_k(f_1,i) \rightsquigarrow \mathsf{FCT}_k(f_2)}$
        for some $i$.
    \end{itemize}
\end{restatable}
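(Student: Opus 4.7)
The plan is to track how each server's local clock $C_k$ evolves along causally related events, leveraging the Lamport invariant that after any call to \textsc{Synchronise}$(\tau)$ on server $k$, we have $C_k \geq \tau + 1$ irrespective of which branch of the conditional was taken.

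First I would establish four invariants that hold throughout any run. (a) \emph{Clock monotonicity:} $C_k$ never decreases, since \textsc{Synchronise} only overwrites $C_k$ with a strictly larger value. (b) \emph{Bounded timestamps:} whenever $T_k(f)$ is defined for $f \in I_k$, $T_k(f) < C_k$, because $T_k(f)$ is set in line~\ref{sync-server-add-fact} to the current $C_k$ and $C_k$ is immediately increased. (c) \emph{PAR timestamp identity:} every $\mathsf{PAR}$ message originating from \textsc{MatchRules} on $f_1$ carries $\tau = T(f_1)$, because this value is fixed in \textsc{ProcessFact} and passed unchanged through \textsc{FinishMatch} and \textsc{ProcessMessage}. (d) \emph{FCT chain monotonicity:} every $\mathsf{FCT}$ message sent from server $k$ has $\tau = C_k$ at the send time, so by (a) and the \textsc{Synchronise} post-condition, the $\tau$-values along any $\mathsf{FCT}$ forwarding chain are strictly increasing.

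With these in place, Case 1 is immediate: the event $\mathsf{PAR}_k(f_1,i)$ completes line~\ref{process-message-partial-sync}, so by (c) we have $C_k \geq T(f_1) + 1$ afterwards; by (a) the bound persists until $\mathsf{add}_k(f_2)$, which sets $T_k(f_2)$ to the current $C_k$, giving $T(f_2) \geq T(f_1) + 1$. For Cases 2 and 3, the preceding event on $k$ again forces $C_k \geq T(f_1) + 1$. The $\mathsf{FCT}$ event for $f_2$ on $k$ then either inserts $f_2$ into $I_k$ directly (if $k$ is the storing server $k_h$ and $D$ becomes empty) or forwards the message with new $\tau = C_k \geq T(f_1) + 1$. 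By (d) and induction along the forwarding chain, the $\tau$ carried by the message remains $\geq T(f_1) + 1$ until it reaches $k_h$, whose clock is therefore $\geq T(f_1) + 1$ at the moment $f_2$ is inserted; by (a) the eventual $\mathsf{add}_{k_h}(f_2)$ then assigns $T(f_2) \geq T(f_1) + 1$.

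The main subtlety I expect is the lazy timestamp assignment in \textsc{Synchronise}: $f_2$ may reside in $I_{k_h}$ without a timestamp for some time before the next \textsc{Synchronise} on $k_h$ finally records it. I would therefore be careful to argue that between the $\mathsf{FCT}$ processing that inserts $f_2$ and the eventual $\mathsf{add}_{k_h}(f_2)$, the clock on $k_h$ cannot fall below the value it attained at insertion, which is exactly invariant (a). Combined with the bookkeeping of which $\tau$ is carried by each message type (c, d), this uniformly delivers all three bounds.
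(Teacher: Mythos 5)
Your proposal is correct and follows essentially the same route as the paper's proof: the \textsc{Synchronise} post-condition forces $C_k > T(f_1)$ after the first event, clock monotonicity preserves this bound, and the timestamp $\tau$ carried by the (possibly forwarded) $\mathsf{FCT}$ message keeps the destination server's clock above $T(f_1)$ when $f_2$ is finally inserted and later timestamped. Your version merely makes explicit the four invariants that the paper's terser argument uses implicitly (notably the lazy-timestamping subtlety in \textsc{Synchronise}), which is a presentational rather than a substantive difference.
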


Next, we show that then the occurrence mappings $\mu_{k,\pi}$ on each relevant
server $k$ are updated whenever a triple is added to some $I_j$. This condition
is formally captured in Lemma~\ref{lemma:occ}, and it ensures that partial
answers are sent to all relevant servers that can possibly match an atom in a
query. Note that the implication in Lemma~\ref{lemma:occ} is the only relevant
direction: if $\mu_{k,\pi}(r)$ contains irrelevant servers, we can have
redundant $\mathsf{PAR}$ messages, but this does not harm correctness.

\begin{restatable}{lemma}{lemmaocc}\label{lemma:occ}
    At any point in the algorithm's run, for all servers $k$ and $j$, each
    position ${\pi \in \Pi}$, and each resource $r$ such that $r$ occurs in
    $I_j$ at position $\pi$ and $\mu_{k,\pi}$ is defined on $r$, property ${j
    \in \mu_{k,\pi}(r)}$ holds.
\end{restatable}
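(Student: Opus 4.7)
The plan is to proceed by induction on the sequence of atomic state transitions in a run of the algorithm. The invariant can be violated only at two points: the addition of a fact to $I_k$ at line~\ref{process-message-add}, and the expansion of an occurrence mapping at line~\ref{process-message-fact-add-occurrence}. The base case is immediate from the loading assumption, under which each server's occurrence mappings already reflect the distribution of the initial input.

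For line~\ref{process-message-fact-add-occurrence}, I would check that every server newly inserted into $\mu_{k,\pi}(r)$ actually (or imminently) hosts $r$ at position $\pi$. New entries come solely from $\lambda_\pi(r)$ carried by a $\mathsf{FCT}$ message. I would establish a side invariant that $\pmb\lambda$ in every in-flight $\mathsf{FCT}$ message is built from genuine server--resource occurrences: its entries originate either from $\mu_{*,\pi}$ values read earlier on some server in the relay chain (correct by the inductive hypothesis), from the partial-occurrence vector accumulated in $\textsc{FinishMatch}$ for resources present in the matched atoms or in the rule head, or from the explicit addition of the target $k_h$, which by construction is about to receive $f$. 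So $\mu_{k,\pi}(r)$ only ever learns of servers that truly host $r$ at position $\pi$, and the inductive hypothesis is preserved when only $\mu$ is updated.

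For line~\ref{process-message-add}, I would strengthen the lemma to the following auxiliary invariant on in-flight messages $\mathsf{FCT}(f, D, k_h, \tau, \pmb\lambda)$:
\begin{quote}
the union of $D$ with the server currently processing the message contains every server $j$ whose $\mu_{j,\pi}$ is defined on some resource $r$ in $f$ at some position $\pi$ and whose $\mu_{j,\pi}(r)$ does not yet contain $k_h$.
\end{quote}
This invariant is preserved at each relay step because line~\ref{process-message-fact-extend-D} pulls into $D$ exactly those servers that a concurrent message may have added to $\mu_{k,\pi}(r)$ since $\pmb\lambda$ was last refreshed, and the joint assignment at line~\ref{process-message-fact-add-occurrence} then re-synchronises $\pmb\lambda$ with $\mu_{k,\pi}$. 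When the relay finally reaches $D = \emptyset$ at some server $k = k_h$, the invariant guarantees that every server with a relevant occurrence of any resource in $f$ already lists $k_h$ in its mapping, which is exactly what the lemma asks for the newly added fact.

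The main obstacle will be justifying that the auxiliary invariant survives arbitrary interleavings of concurrent $\mathsf{FCT}$ relays. The delicate case is two messages visiting the same server $k$ and racing on the same $\mu_{k,\pi}(r)$: each message must see the other's write before deciding whether that write belongs in its own $D$. The fact that line~\ref{process-message-fact-extend-D} and line~\ref{process-message-fact-add-occurrence} are executed together inside the same critical section as $\textsc{Synchronise}$ is the crux, and I expect the formal argument to reduce to a case analysis on the relative order, inside that critical section, of one message's update of $\mu_{k,\pi}(r)$ and the other's read of the same slot, combined with the observation that whichever message writes second inherits the first's new entry into its own $D$.
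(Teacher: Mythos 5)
Your invariant-based framing is reasonable and you have correctly identified the local mechanism that resolves races: two $\mathsf{FCT}$ relays that meet at a common server are serialised there, and whichever arrives second pulls the first relay's newly written occurrence into its own $D$ via line~\ref{process-message-fact-extend-D}. However, there is a genuine gap: your auxiliary invariant on in-flight messages is not established at message-creation time, and the preservation argument alone cannot repair this. When a $\mathsf{FCT}$ message for a fact $f$ containing resource $r$ is built in $\textsc{FinishMatch}$, its initial $D$ is assembled only from the partial occurrences $\pmb\lambda$ accumulated along the servers of the matching chain. A server $j'$ that is concurrently becoming (or has just become) a host of $r$ --- via a different, independently derived first fact containing $r$ --- need not appear in any of those $\pmb\lambda$ entries, so $j'$ is in neither $D$ nor the current server, yet $\mu_{j',\pi}$ is (about to be) defined on $r$ without containing $k_h$. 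Your preservation step only rescues this if the two relays visit a common server, and nothing in your proposal shows that they must. This is precisely the content of the paper's auxiliary Lemma~\ref{lemma:updateset}: for any two facts in $P^\infty(I)$ sharing a resource $r$, their update sets intersect; it is proved by a separate induction over derivation depth, tracking that update sets for $r$ are inherited and grow along derivation chains. Without that lemma (or an equivalent global argument), the case analysis ``whichever message writes second inherits the first's entry'' has no server at which to take place, and the invariant fails.

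A secondary, smaller issue: the first half of your proposal verifies that every server inserted into $\mu_{k,\pi}(r)$ genuinely hosts $r$. The paper explicitly notes that this soundness direction is irrelevant --- spurious entries only cause redundant $\mathsf{PAR}$ messages. The actual hazard at line~\ref{process-message-fact-add-occurrence} is the opposite one: when $\mu_{k,\pi}$ becomes newly \emph{defined} on $r$, the lemma's hypothesis is newly triggered and you must show the mapping already lists \emph{every} server currently hosting $r$ at $\pi$, which again routes through the intersection property above rather than through soundness of individual entries.
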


Using Lemmas~\ref{lemma:ccc} and~\ref{lemma:occ}, we prove our main claim.

\begin{restatable}{theorem}{thmcorrectness}\label{thm:correctness}
    For ${I_1, \dots, I_\ell}$ the sets obtained by applying
    Algorithms~\ref{alg:materialisation-i} and~\ref{alg:materialisation-ii} to
    an input set of facts $I$ and program $P$, we have ${P^{\infty}(I) = I_1
    \cup \dots \cup I_\ell}$. Moreover, the algorithm exhibits the
    nonrepetition property.
\end{restatable}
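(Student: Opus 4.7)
The plan is to decompose the theorem into three parts—soundness ($I_1 \cup \dots \cup I_\ell \subseteq P^\infty(I)$), completeness ($P^\infty(I) \subseteq I_1 \cup \dots \cup I_\ell$), and nonrepetition—with Lemmas~\ref{lemma:ccc} and~\ref{lemma:occ} supplying the key invariants for the first two claims. Soundness is the easier direction: I would induct on the chronological order in which facts are inserted into any $I_k$. Facts present at the outset constitute $I = P^0(I)$; every subsequent fact is an $h\sigma$ emitted by $\textsc{FinishMatch}$ at $i = n$, and this happens only after each body atom $b_j\sigma$ has been matched against a fact already stored in some $I_{k_j}$ via $\textsc{Evaluate}$. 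By the inductive hypothesis each $b_j\sigma$ lies in $P^\infty(I)$, so $h\sigma$ follows from one rule application and therefore also lies in $P^\infty(I)$.

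Completeness is the main obstacle and relies crucially on the choice of pivot. I would induct on the least $i$ with $f \in P^i(I)$; the base case is immediate from the initial partitioning. For the step, write $f = h\sigma$ for a rule $h \leftarrow b_1 \wedge \dots \wedge b_n$ with each $b_j\sigma \in P^{i-1}(I)$. By induction each $b_j\sigma$ eventually enters some $I_{k_j}$, is processed by $\textsc{ProcessFact}$, and receives a timestamp $\tau_j = T(b_j\sigma)$. Let $p^*$ be the smallest index with $\tau_{p^*} = \max_j \tau_j$. When $b_{p^*}\sigma$ is processed, $\textsc{MatchRules}$ selects $r$ with pivot $b_{p^*}$ and launches a chain of $\mathsf{PAR}$ messages for the annotated query~\eqref{eq:annotated-query-for-p}. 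Lemma~\ref{lemma:occ} ensures that each such message is routed to the server $k_j$ holding the remaining body fact. The delicate step is to verify that $b_j\sigma$ is already in $I_{k_j}$ when the corresponding $\mathsf{PAR}$ arrives there; were this not the case, the $\mathsf{PAR}$ event at $k_j$ would causally precede the $\mathsf{FCT}$ event that deposits $b_j\sigma$, and the third clause of Lemma~\ref{lemma:ccc} would force $\tau_{p^*} < \tau_j$, contradicting maximality of $\tau_{p^*}$. Once $b_j\sigma$ is present, the $\textsc{Synchronise}$ call inside $\mathsf{PAR}$ processing guarantees its timestamp, if not already defined, is fixed to a value no greater than $\tau_{p^*}$, by an analogous clock-bumping argument on $C_{k_j}$. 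The pivot choice then meets both annotation patterns: $\tau_j < \tau_{p^*}$ for $j < p^*$ by minimality of $p^*$ and $\tau_j \leq \tau_{p^*}$ for $j > p^*$ by maximality, so the chain reaches $i = n$ and $\textsc{FinishMatch}$ emits $h\sigma$.

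For nonrepetition I would note that the same annotation structure pins the pivot down uniquely. Given body timestamps $\tau_1, \dots, \tau_n$, a pivot $p$ admits a successful match only if $\tau_j < \tau_p$ for $j < p$ and $\tau_j \leq \tau_p$ for $j > p$, and these two constraints together force $p$ to be the smallest index with $\tau_p = \max_j \tau_j$. Since each rule instance $(r, \sigma)$ is derived only when some pivot attempt completes at $i = n$, and at most one pivot qualifies, no rule instance fires more than once in the run.
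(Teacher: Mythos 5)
Your decomposition (soundness, completeness, nonrepetition), your choice of pivot as the smallest index attaining the maximal body timestamp, your use of Lemma~\ref{lemma:ccc} to rule out a $\mathsf{PAR}$ event arriving before the matching fact is added, and your uniqueness-of-pivot argument for nonrepetition all coincide with the paper's proof. However, there is a genuine gap in the completeness step, and it sits at exactly the point the paper works hardest on. You write that ``Lemma~\ref{lemma:occ} ensures that each such message is routed to the server $k_j$ holding the remaining body fact,'' but this does not follow. The routing decision in line~\ref{finish-match-destination-list} is made by intersecting the \emph{partial occurrences} $\pmb\lambda$ carried in the message, which are snapshots of $\mu_{k_s,\pi}$ taken on an earlier server $k_s$ at an earlier time (line~\ref{finish-match-add-supporting-ocr}). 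At that moment the fact $a_{j+1}\sigma$ may not yet have been added to $I_{k_{j+1}}$, so Lemma~\ref{lemma:occ} --- which only speaks about resources \emph{currently} occurring in some $I_j$ --- gives no guarantee that $k_{j+1} \in \mu_{k_s,\pi}(r)$ at snapshot time, hence no guarantee that $k_{j+1}$ survives the intersection and receives the $\mathsf{PAR}$ message at all. Your causality argument via Lemma~\ref{lemma:ccc} handles the case where the $\mathsf{PAR}$ event occurs too early at $k_{j+1}$, but not the case where it never occurs.

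Closing this case is the bulk of the paper's completeness proof. One must argue that if $k_{j+1}$ is dropped from $D$, then the first fact containing $r$ at position $\pi$ added to $I_{k_{j+1}}$ triggers an occurrence-update tour of $\mathsf{FCT}$ messages whose update set must intersect the update set of the first $r$-fact on $k_s$ (this is the auxiliary Lemma~\ref{lemma:updateset}, which your proposal never needs and therefore silently omits); a case analysis on the order of the two $\mathsf{FCT}$ events at a common server then shows that $k_s$ is eventually notified, and the only way it could still have stale occurrences when it reads them is if the read causally precedes the $\mathsf{FCT}$ event --- at which point Lemma~\ref{lemma:ccc} yields $T(a_0\sigma) < T(a_{j+1}\sigma)$, contradicting the maximality of the pivot timestamp. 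Without this argument (or an equivalent one), completeness is not established: an asynchronous run could in principle route a partial match past a server that holds, or will hold, the needed fact. The soundness and nonrepetition parts of your proposal are fine and match the paper.
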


\section{Evaluation}\label{sec:evaluation}

To evaluate the practical applicability of our approach, we have implemented a
prototype distributed datalog reasoned that we call DMAT. We have reused a
well-known centralised RDF system to store and index triples in RAM, on top of
which we have implemented a mechanism for associating triples with timestamps.
To implement the $\textsc{Evaluate}$ function, we use the system's interface
for answering individual atoms and then simply filter out the answers whose
timestamp does not match the given one. For simplicity, DMAT currently uses
only one thread per server, but we plan to remove this limitation in future. We have published\footnote{http://krr-nas.cs.ox.ac.uk/2019/distributed-materialisation/} the executable and test files used, with the exception of the datasets, which can be recreated using the LUBM generator.

We have evaluated our system's performance in three different ways, each aimed
at analysing a specific aspect of the problem. First, to establish a baseline
for the performance of DMAT, as well as to see whether distributing the data
can speed up the computation, we compared DMAT with RDFox
\cite{mnpho14parallel-materialisation-RDFox}---a state-of-the-art, centralised,
RAM-based reasoner---on a relatively small dataset. Second, to compare the
performance of our approach with the state-of-the-art for distributed
reasoning, we compared DMAT with WebPIE
\cite{DBLP:journals/ws/UrbaniKMHB12}---a distributed RDF reasoner based on
MapReduce. Third, we studied the scalability of our approach by proportionally
increasing the input data and the number of servers.

Few truly large RDF datasets are publicly available, so the evaluation of
distributed reasoning is commonly based on the well-known
LUBM\footnote{\url{http://swat.cse.lehigh.edu/projects/lubm/}} benchmark (e.g.,
\cite{DBLP:journals/ws/UrbaniKMHB12, DBLP:conf/sigmod/LiuM17,
DBLP:conf/wise/WuLWYWZ16}). Following this practice, LUBM datasets of sizes
ranging from 134~M to 6.5~G triples provided the input data for our system. We
also used the \emph{lower bound} program was obtained by extracting the OWL 2
RL portion of the LUBM ontology and translating it into datalog.

We conducted all tests with DMAT on the Amazon Elastic Compute Cloud (EC2). We
used the $r4.8xlarge$ servers, each equipped with a 2.3~GHz Intel Broadwell
processors and 244~GB RAM; such a large amount of RAM was needed since the
underlying storage mechanism in our system is RAM-based. An additional,
identical server stored the dictionary (i.e., a data structure mapping
resources to integers): this server did not participate in materialisation, but
was used only to distribute the program and the data to the cluster. Finally, the EC2 instances offer 10 Gbps network performance, according to the manifest published by Amazon\footnote{https://aws.amazon.com/ec2/instance-types/}. In all
tests apart from the ones with WebPIE, we partitioned the dataset by using the graph
partitioning approach by \citet{pmnh18dynamic-rdf-stores}: this data
partitioning approach aims to place strongly connected resources on the same
server and thus reduce communication overhead. For the tests with WebPIE, due to memory constraints of the partitioning software, we partitioned triples by
subject hashing. For each test, we loaded the input triples and the program
into all servers, and computed the materialisation while recording the
wall-clock time. Apart from reporting this time, we also report the
\emph{reasoning throughput} measured in thousands of triples derived per second
and worker (ktps/w). We next discuss the results of our experiments.

\paragraph{Comparison with RDFox.} First, we ran RDFox and DMAT on a fixed
dataset while increasing the number of threads for RDFox and the numbers of
servers for DMAT. Since RDFox requires the materialised dataset to fit into RAM
of a single server, we used a small input dataset of just 134~M triples. The
results, shown in Table~\ref{tab:concurrency-v-distribution}, provide us with
two insights. First, the comparison on one thread establishes a baseline for
the DMAT's performance. In particular, DMAT is slower than RDFox, which is not
surprising: RDFox is a mature and tuned system, whereas DMAT is just a
prototype. However, DMAT is still competitive with RDFox, suggesting that our
approach is free of any overheads that might make it uncompetitive. Second, the
comparison on multiple threads shows how effective our approach is at achieving
concurrency. RDFox was specifically designed with that goal in mind in a
shared-memory setting. However, as one can see from our results, DMAT also
parallelises computation well: in some cases the speedup is larger than in the
case of RDFox. This seems to be the case mainly because data partitioning
allows each server to handle an isolated portion of the graph, which can reduce
the need for synchronisation.

\begin{table}[tb]
\centering
\caption{Comparison of Centralised and Distributed Reasoning}\label{tab:concurrency-v-distribution}
\vspace{-2ex}
\begin{tabular}{c|r|r|r|r|r|r|r|r}
\hline
    \multirow{3}{*}{}   & \multicolumn{8}{c}{Threads/Servers} \\
    \cline{2-9}
                        & \multicolumn{2}{c|}{1}    & \multicolumn{2}{c|}{2}    & \multicolumn{2}{c|}{4}    & \multicolumn{2}{c}{8} \\
    \cline{2-9}
                        & RDFox & DMAT              & RDFox & DMAT              & RDFox & DMAT              & RDFox & DMAT \\
    \hline
    Times (s)           & 86    & 256               & 56    & 140               & 35    & 82                & 16    & 53   \\
    Speed-up            & 1.0x  & 1.0x              & 1.5x  & 1.8x              & 2.5x  & 3.1x              & 5.4x  & 4.8x \\
    \hline
    Size                & \multicolumn{8}{c}{$134 M \rightarrow 182 M$} \\
    \hline
\end{tabular}
\vspace{4ex}
\caption{Comparison with WebPIE}\label{tab:WebPIE}
\vspace{-2ex}
\begin{tabular}{r|S|S|S[table-number-alignment = right]|S|S[table-number-alignment = right]|S}
    \hline
    \multirow{2}{*}{Dataset}    & \multicolumn{2}{c|}{Sizes (G)}                                & \multicolumn{2}{c|}{WebPIE (64 workers)}                      & \multicolumn{2}{c}{DMAT (12 servers)} \\
    \cline{2-7}
                                & \multicolumn{1}{c|}{Input}    & \multicolumn{1}{c|}{Output}   & \multicolumn{1}{c|}{Time (s)} & \multicolumn{1}{c|}{ktps/w}   & \multicolumn{1}{c|}{Time (s)} & \multicolumn{1}{c}{ktps/w} \\
    \hline
    4K                          & 0.5                           & 0.729                         & 1920                          & 4.1                           & 224                           & 85 \\
    8K                          & 1                             & 1.457                         & 2100                          & 7.5                           & 461                           & 81 \\
    36K                         & 5                             & 6.516                         & 3120                          & 24.9                          & 2087                          & 71 \\
    \hline
\end{tabular}
\vspace{4ex}
\caption{Scalability Experiments}\label{tab:scalability}
\vspace{-2ex}
\begin{tabular}{c|c|c|c|c|c}
    \hline
            &           & Input     & Output    & Time  & Rate      \\
    Workers & Dataset   & size (G)  & size (G)  & (s)   & (ktps/w)  \\
    \hline
    2       & 4K        & 0.5       & 0.73      & 646   & 212       \\
    6       & 12K       & 1.6       & 2.19      & 769   & 173       \\
    10      & 20K       & 2.65      & 3.64      & 887   & 151       \\
    \hline
\end{tabular}
\end{table}

\paragraph{Comparison with WebPIE.} Next, we compared DMAT with WebPIE to see
how our approach compares with the state of the art in distributed
materialisation. To keep the experimentation effort manageable, we did not
rerun WebPIE ourselves; rather, we considered the same input dataset sizes as
\citet{DBLP:journals/ws/UrbaniKMHB12} and reused their published results. The
setting of these experiments thus does not quite match our setting: (i)~WebPIE
handles only the ter Horst fragment of OWL and thus cannot handle all axioms in
the OWL 2 RL subset of the LUBM ontology; (ii)~experiments with WebPIE were run
on physical (rather than virtualised) servers with only 24~GB of RAM each; and
(iii)~WebPie used 64 workers, while DMAT used just 12 servers. Nevertheless, as
one can see from Table~\ref{tab:WebPIE}, despite using more than five times
fewer servers, DMAT is faster by an order of magnitude. Hadoop is a disk-based
system so lower performance is to be expected to some extent, but this may not
be the only reason: triples in DMAT are partitioned by subject so, unlike
WebPIE, DMAT does not perform any communication on subject--subject joins.

\paragraph{Scalability Experiments.} Finally, to investigate the scalability of
DMAT, we measured how the system's performance changes when the input data and
the number of servers increase proportionally. The results are shown in
Table~\ref{tab:scalability}. As one can see, increasing the size of the input
does introduce an overhead for each server. Our analysis suggests that this is
mainly because handling a larger dataset requires sending more messages, and
communication seems to be the main source of overhead in the system. This, in
turn, leads to a moderate reduction in throughout. Nevertheless, the system
still exhibits very high inferences rates and clearly scales to very large
inputs.

\section{Conclusion}\label{sec:conclusion}

In this paper we have presented a novel approach to datalog reasoning in
distributed RDF systems. Our work extends the distributed query answering
algorithm by \citet{pmnh18dynamic-rdf-stores}, from which it inherits several
benefits. First, the servers in our system are asynchronous, which is
beneficial for concurrency. Second, dynamic data exchange is effective at
reducing network communication, particularly when input data is partitioned so
that related triples are co-located. Finally, we have shown empirically that our prototype implementation
is an order of magnitude faster than WebPIE \cite{DBLP:conf/aaai/UrbaniJK16},
and that it scales to increasing data loads. In the near future we intend to conduct tests on a broader range of datasets and rule sets, as well as direct comparisons with other in-memory distributed systems that perform tasks similar to DMAT.

We see several interesting avenues for our future work. First, better
approaches to partitioning the input data are needed: hash partitioning does
not guarantee that joins other than subject--subject ones are processed on one
server, and graph partitioning cannot handle large input graphs. Second,
supporting more advanced features of datalog, such as stratified negation and
aggregation is also needed in many practical applications.

\section*{Acknowledgments}

This work was supported by the SIRIUS Centre for Scalable Access in the Oil and
Gas Domain, and the EPSRC project AnaLOG.

\bibliographystyle{splncsnat}
\bibliography{references,ours}

\newpage
\appendix
\section{Proofs}

\lemmaccc*

\begin{proof}
Consider an arbitrary run of Algorithms~\ref{alg:materialisation-i}
and~\ref{alg:materialisation-ii}, arbitrary server $k$, and arbitrary facts
$f_1$ and $f_2$.

Assume that ${\mathsf{PAR}_k(f_1) \rightsquigarrow \mathsf{add}_k(f_2)}$ holds.
Then, after the call to $\textsc{Synchronise}$ in
line~\ref{process-message-partial-sync}, the local clock of server $k$ has a
value that is strictly larger than $T(f_1)$. Thus, when $f_2$ is assigned a
timestamp, ${T(f_2) > T(f_1})$ holds.

If ${\mathsf{process}_k(f_1) \rightsquigarrow \mathsf{FCT}_k(f_2)}$ (resp.\
${\mathsf{PAR}_k(f_1) \rightsquigarrow \mathsf{FCT}_k(f_2)}$) holds, then after
the call to $\textsc{Synchronise}$ in line \ref{process-fact-sync} (resp.\
\ref{process-message-partial-sync}), the local clock of server $k$ has a value
that is strictly larger than $T(f_1)$. Server $k$ reads this value into the
$\mathsf{FCT}$ message for $f_2$ in line~\ref{finish-match-send-FCT} or
line~\ref{process-message-fact-gettime}. Before $f_2$ is added on some
destination server, this server calls $\textsc{Synchronise}$ in
line~\ref{process-message-fact-sync}, which ensures $T(f_2) > T(f_1)$. \qed
\end{proof}

\lemmaocc*

\begin{proof}
Fix $r$ in the domain of $\mu_{k,\pi}$, we want to show that $ j\in
\mu_{k,\pi}(r) \rightsquigarrow r\in voc_{\pi}(I_j)$. If the initial state of
$voc_{\pi}(I_j)$ includes $r$, then there is nothing to prove because the
occurrence mappings are initialized consistent.

Now let $\mathsf{addOcc}_k(r,\pi, j)$ denote the point at line
\ref{process-message-fact-add-occurrence} where $j$ is added to the image of
$\mu_{k,\pi}(r)$. Let $f_{k,r}$ denote the first fact added to $I_k$ with $r$
as an argument and $\pi_k$ the position of $r$ in $f_{k,r}$. Let $f_{j,r}$ be
the analogous for $I_j$. Let $D(f)$ denote the set of update servers compiled
for fact $f$ in $\textsc{FinishMatch}$.

Assume $k\neq j$ and that $\mathsf{add}_k(f_{k,r})\rightsquigarrow
\mathsf{add}_j(f_{j,r})$ and $f_{j,r}\not\in I_0$. We know that the set
$D(f_{j,r})\cap D(f_{k,r})$ is not empty because of
Lemma~\ref{lemma:updateset}. Let $l$ be an element of the intersection. Both
$\mathsf{FCT}_l(f_{k,r})$ and $\mathsf{FCT}_l(f_{j,r})$ will occur on server
$l$. If $\mathsf{addOcc}_{l}(r,\pi_k,k)\rightsquigarrow
\mathsf{addOcc}_{l}(r,\pi_j,j)$, $k$ is be added to $D(f_{j,r})$ at line
\ref{process-message-fact-add-occurrence}, if it was not an initial member of
$D(f_{j,r})$. In either case, or in the case that $l=k$, $f_{j,r}$ updates
server $k$ before being added to $I_k$, hence $\mathsf{addOcc}_k(r, \pi_j,
j)\rightsquigarrow \mathsf{add}_j(f_{j,r})$. If instead
$\mathsf{addOcc}_{l}(r,\pi_j,j)\rightsquigarrow
\mathsf{addOcc}_{l}(r,\pi_k,k)$, $j$ is added to $D(f_{k,r})$, and we can still
conclude $\mathsf{addOcc}_k(r, \pi_j, j)\rightsquigarrow
\mathsf{add}_k(f_{k,r}) \rightsquigarrow \mathsf{add}_j(f_{j,r})$ because of
our initial assumption. If we assume
$\mathsf{add}_j(f_{j,r})\rightsquigarrow\mathsf{add}_k(f_{k,r})$ then the
symmetrical argument applies.

We have shown that, for $k\neq j$, $\mathsf{addOcc}_k(r, \pi_j, j)
\rightsquigarrow \mathsf{add}_j(f_{j,r})$, which is equivalent to the thesis.
When $k=j$, then it is true by construction that $\mathsf{addOcc}_k(r, \pi_k,
k) \rightsquigarrow \mathsf{add}_k(f_{k,r})$ and we conclude. \qed
\end{proof}

\begin{restatable}{lemma}{lemmaupdateset}\label{lemma:updateset}
Let $a,b\in P^{\infty}(I)$ have a common resource $r$, then $D_r(a)\cap
D_r(b)\neq \emptyset$, where $D_r(x)$ is the update set created for the
$\mathsf{FCT}$ message of $x$.
\end{restatable}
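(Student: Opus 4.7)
The plan is to identify a canonical server $s_r$ associated with the shared resource $r$ and then show that $s_r$ belongs to $D_r(f)$ for every fact $f \in P^{\infty}(I)$ containing $r$; this immediately gives $s_r \in D_r(a) \cap D_r(b)$. The natural choice for $s_r$ is the owner of the first fact containing $r$---either an initial input fact at some server, or the first fact derived during materialisation that mentions $r$.

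The first step is a short case analysis that settles the easy situations. If $r$ occurs as the subject of both $a$ and $b$, then by the subject-uniqueness requirement of Section~\ref{sec:algorithm:occurrence} (that $\mu_{k,s}(r)$ contains exactly one server whenever defined) we have $k_{h_a} = k_{h_b}$, and both $D_r(a)$ and $D_r(b)$ contain this common owner by construction (line~\ref{finish-match-D-n}). The same holds whenever the two facts happen to be owned by the same server for any other reason. The remaining, harder situation is when $k_{h_a} \neq k_{h_b}$ and $r$ plays different roles in $a$ and $b$.

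For this harder case I would proceed by induction on Lamport timestamps, justified by Lemma~\ref{lemma:ccc}, strengthening the hypothesis to assert that by the time the derivation of any fact $f$ containing $r$ reaches line~\ref{finish-match-D-n}, the partial occurrence vector $\pmb\lambda$ assembled along the PAR chain already records $s_r$ in some $\lambda_{\pi'}(r)$. The key algorithmic observation is the branching at that line: whenever $k_{h_f} \notin \lambda_{\pi_f}(r)$, every $\lambda_{\pi'}(r)$ is dumped into $D_r(f)$, so $s_r \in D_r(f)$ follows directly from the hypothesis. In the complementary branch, where $k_{h_f} \in \lambda_{\pi_f}(r)$ already, a prior FCT cascade for a fact placing $r$ at $\pi_f$ on $k_{h_f}$ must have reached a server contributing to $\pmb\lambda$, and the propagation of local $\mu$ into $\lambda$ in lines~\ref{finish-match-add-supporting-ocr} and~\ref{finish-match-h-start}--\ref{finish-match-h-end} then carries $s_r$ into the vector by the inductive hypothesis applied to that earlier fact.

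The main obstacle I expect is the truly concurrent scenario: $a$ and $b$ may be derived on distinct servers in overlapping time windows, so that neither's FCT cascade has reached the other's deriving server before it consults its local $\mu$. Formalising that $s_r$ is nevertheless captured in $\pmb\lambda$ for both derivations requires a careful use of Lemma~\ref{lemma:ccc} to linearise the $\textsc{Synchronise}$ events and to track which $\mu$ updates precede which derivations under any message interleaving consistent with the run. The argument ultimately rests on the observation that every derivation mentioning $r$ must at some point touch a server whose $\mu$ records the canonical witness $s_r$, since the very first introduction of $r$ triggers a chain of occurrence updates that later derivations cannot sidestep.
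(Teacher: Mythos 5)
Your overall strategy is genuinely different from the paper's, and the difference is exactly where the argument breaks. The paper proves only the \emph{pairwise} statement, by induction on the rounds $P^i(I)$: it traces $a$ and $b$ back along chains ${x_0,\dots,x_n}$ of facts that each contain $r$ and each participate in the derivation of the next, observes that the partial occurrences for $r$ are handed down along such a chain so that ${D_r(x_i)\subseteq D_r(x_{i+1})}$, and concludes that the nonempty intersection available at the chain roots (by the induction hypothesis, and ultimately by the consistent initialisation of the occurrence mappings) is preserved up to $a$ and $b$. You instead try to exhibit a single global witness $s_r$ lying in $D_r(f)$ for \emph{every} fact $f$ containing $r$. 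That stronger invariant is not maintained by the algorithm. Look at line~\ref{finish-match-D-n} and the loop that follows it in Algorithm~\ref{alg:materialisation-ii}: the sets $\lambda_{\pi'}(r)$ are added to $D$ \emph{only} when ${k_h\notin\lambda_{\pi}(r)}$. Once some fact with $r$ in, say, subject position has been stored at its owner $k_1$ and that occurrence has propagated, every later derivation of a fact with $r$ as subject finds ${k_1\in\lambda_{s}(r)}$, skips the loop body for $r$, and contributes nothing from $\lambda(r)$ to $D$; the update set for $r$ then reduces to (at most) $\{k_1\}$, which need not contain your canonical $s_r$. Your own text concedes the problem: in the complementary branch you only argue that $s_r$ ends up in the vector $\pmb\lambda$, but membership in $\pmb\lambda$ does not transfer to $D$ when the guard fails, so the induction step does not close.

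There is also a structural problem with the closing appeal that ``every derivation mentioning $r$ must at some point touch a server whose $\mu$ records the canonical witness $s_r$.'' That is essentially an instance of Lemma~\ref{lemma:occ} (the occurrence mappings correctly record where $r$ lives), but in the paper Lemma~\ref{lemma:occ} is \emph{proved using} Lemma~\ref{lemma:updateset}; relying on it here would make the two lemmas circular. The paper's chain argument avoids this by using only the monotone inheritance of the update-set contribution of $r$ along a derivation chain of $r$-containing facts (the partial occurrence vector is passed along in the $\mathsf{PAR}$ and $\mathsf{FCT}$ messages) together with the fact that chains bottom out either in input facts, whose occurrences are consistently initialised, or in facts of smaller derivation depth covered by the induction hypothesis. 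If you want to keep a witness-style proof, the witness must be chosen per pair $(a,b)$, from the intersection already present at the roots of their two ancestor chains, not globally per resource.
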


\begin{proof}
The proof is by induction over $P^i(I)$. It is true by definition for $i=0$
because the condition in Lemma~\ref{lemma:occ} has to apply to the initial
configuration. For the inductive step, suppose the property is true for all
facts in $P^{i-1}(I)$ and let $a,b\in P^i(I)$ share a resource $r$. We can find
two chains $a_0, a_1, \cdots, a_n = a$ and $b_0, b_1, \cdots, b_m = b$ such
that for each $i$ and $x\in\{a,b\}$ $x_i$ participates in the derivation of
$x_{i+1}$, it has $r$ among its arguments, and $D_r(x_i)\subseteq D_r(x_{i+1})$
(we select the branch of the derivation tree that corresponds to the first
matchings of the resource $r$ so that the partial mappings are passed between
$x_i$ and $x_{i+1}$). At each stage $D_r(a_i)\cap D_r(b_i)\subset
D_r(a_{i+1})\cap D_r(b_{i+1})$, therefore the property holds.
\end{proof}

\thmcorrectness*

\begin{proof}[Soundness]
The proof is by induction on the construction of sets $I_i$. The argument is
straightforward so we just present a sketch: when ${(\sigma, a, Q, h)}$ is
returned on some server $k$ in line~\ref{process-fact-match-rules},
substitution $\sigma$ satisfies ${a\sigma \in I_k}$; moreover, as matching of
$Q$ progresses, each substitution $\sigma'$ returned in line
line~\ref{process-message-eval} satisfies ${a_i\sigma' \in I_{k'}}$;
consequently, each substitution $\sigma$ in line~\ref{finish-match-send-FCT} is
an answer to the annotated query $Q$. Thus, each such $\sigma$ matches all body
atoms of the rule corresponding to ${(\sigma, a, Q, h)}$ in $P^\infty(I)$, and
so we clearly have ${h\sigma \in P^\infty(I)}$. \qed
\end{proof}

\begin{proof}[Completeness]
Let $P$ be a program, let $I$ be an input dataset, and let ${I_1, \dots,
I_\ell}$ be the datasets computes after Algorithms~\ref{alg:materialisation-i}
and~\ref{alg:materialisation-ii} finish on some partition of $I$ to $\ell$
servers. Our claim follows from the following property:
\begin{quote}
     ($\ast$) for each $i$ and each fact ${f \in P^i(I)}$, a server $k$ exists
     were ${f \in I_k}$ holds.
\end{quote}
The proof is by induction on $i$. The base case holds trivially, so we assume
that ($\ast$) holds for some ${i \geq 0}$ and show that it also holds for
${i+1}$. To this end, we consider an arbitrary fact ${f \in P^{i+1}(I)
\setminus P^i(I)}$. This fact is derived by a rule ${h \leftarrow b_0 \wedge
\dots \wedge b_n \in P}$ and substitution $\sigma$ such that ${h\sigma = f}$
and ${b_j\sigma \in P^i(I)}$ for ${0 \leq j \leq n}$. Now choose $p$ as the
smallest integer between $0$ and $n$ such that ${T(b_{p'}) \leq T(b_p\sigma)}$
holds for each ${0 \leq p' \leq n}$. Now let ${a_0, \dots, a_n}$ be the body
atoms of the rule rearranged so that ${a_0 = b_p}$ is the pivot atom, and the
remaining atoms correspond to the annotated query ${Q = a_1^{\bowtie_1} \wedge
\dots \wedge a_n^{\bowtie_n}}$ returned by ${\textsc{MatchRules}(b_p\sigma,
P)}$ in line~\ref{process-fact-match-rules} on fact $b_p\sigma$. Finally, for
each ${0 \leq j \leq n}$, let $\sigma_j$ be the substitution $\sigma$
restricted to all variables occurring in atoms ${a_0, \dots, a_j}$ and let
${\tau_j = T(a_j\sigma)}$; moreover, ($\ast$) holds for $i$ by the induction
assumption, so there exists a server $k_j$ such that ${a_j\sigma \in I_{k_j}}$
holds. We next prove the following:
\begin{quote}
    ($\lozenge$) for each ${0 \leq j \leq n}$, function
    ${\textsc{FinishMatch}(j, \sigma_j, a_j, Q, h, \tau_0, \pmb\lambda_j)}$ is
    called for some $\pmb\lambda_j$.
\end{quote}
Property ($\lozenge$) implies our claim because in
lines~\ref{finish-match-matched-start}--\ref{finish-match-matched-end} the
algorithm then constructs a $\mathsf{FCT}$ message for $h\sigma$ and dispatches
it to some server $k_h$, so $h\sigma$ is eventually added to $I_{k_h}$ in
line~\ref{process-message-add}, as required for ($\ast$).

We next prove ($\lozenge$) by induction on ${0 \leq j \leq n}$. For the base
case, ${a_0\sigma \in I_{k_0}}$ ensures that $\textsc{ProcessFact}(a_0\sigma)$
is called on server $k_0$, so ${\textsc{MatchRules}(a_0\sigma, P)}$ returns
${(\sigma_0, a_0, Q, h)}$, and ${\textsc{FinishMatch}(0, \sigma_0, a_0, Q, h,
\tau_0, \pmb\emptyset)}$ is called in line~\ref{process-fact-finish-match}. For
the induction step, we assume that ($\lozenge$) holds for some ${0 \leq j <
n}$, and we show that it holds for ${j+1}$ as well. To this end, we consider
several cases.

Assume that event ${\mathsf{PAR}_{k_{j+1}}(a_0\sigma, j+1)}$ occurs at some
point during the algorithm's run. Server $k_{j+1}$ then executes
line~\ref{process-message-eval} for $a_{j+1}^{\bowtie_{j+1}}$. Note that
${a_{j+1}\sigma \in I_{k_{j+1}}}$ holds by induction assumption. We next show
that server $k_{j+1}$ contains $a_{j+1}\sigma$ at the point in time when
line~\ref{process-message-eval} is executed. We have the following
possibilities.
\begin{itemize}
    \item If event $\textsf{add}_{k_{j_1}}(a_{j+1}\sigma)$ never happens, then
    server $k_{j+1}$ contains fact $a_{j+1}\sigma$ since the algorithm's start.
    
    \item If ${\textsf{add}_{k_{j_1}}(a_{j+1}\sigma) \rightsquigarrow
    \mathsf{PAR}_{k_{j+1}}(a_0\sigma, j+1)}$ holds, then server $k_{j+1}$
    clearly contains fact $a_{j+1}\sigma$ at this point in time.
    
    \item If ${\mathsf{PAR}_{k_{j+1}}(a_0\sigma, j+1) \rightsquigarrow
    \textsf{add}_{k_{j_1}}(a_{j+1}\sigma)}$ were to hold, then
    Lemma~\ref{lemma:ccc} implies ${T(a_0\sigma) < T(a_{j+1}\sigma)}$,
    contradicting our assumption that ${T(a_{j+1}\sigma) \leq T(a_0\sigma)}$.
\end{itemize}
Moreover, if ${T(a_{j+1}\sigma) = T(a_0\sigma)}$, since ${a_0 = b_p}$ was
chosen so that $p$ is the least index of a body atom matched to a fact with
timestamp $T(a_0\sigma)$, the shape of $Q$ from
\eqref{eq:annotated-query-for-p} ensures that ${\bowtie_{j+1} = \; \leq}$.
Consequently, the call to $\textsc{Evaluate}$ in
line~\ref{process-message-eval} on server $k_{j+1}$ returns $\sigma_{j+1}$, so
the call in line~\ref{process-message-finish-match} ensures ($\lozenge$).

Now assume that event ${\mathsf{PAR}_{k_{j+1}}(a_0\sigma, j+1)}$ never occurs
during the algorithm's run---that is, server $k_j$ never forwards a
$\mathsf{PAR}$ message to server $k_{j+1}$. Then, for some ${\pi \in \Pi}$ and
${r = \term{a_{j+1}\sigma_j}{\pi}}$, we have ${k_{j+1} \not\in \lambda_\pi(r)}$
at the point in time when line~\ref{finish-match-destination-list} is executed
on server $k_j$, ensuring that $k_{j+1}$ is removed from $D$. However, this
$\lambda_\pi(r)$ is populated in line~\ref{finish-match-add-supporting-ocr}
when resource $r$ is matched on some server $k_s$ with ${0 \leq s \leq j}$, so
at that point in time we have ${k_{j+1} \not\in \mu_{k_s,\pi}(r)}$. Now if
event $\textsf{add}_{k_{j+1}}(a_{j+1}\sigma)$ never happened, then server
$k_{j+1}$ would contain $a_{j+1}\sigma$ when the algorithm starts; but then,
since $\mu_{k_s,\pi}$ is defined on $r$, Lemma~\ref{lemma:occ} implies
${k_{j+1} \not\in \mu_{k_s,\pi}(r)}$, which is a contradiction. Consequently,
event $\textsf{add}_{k_{j+1}}(a_{j+1}\sigma)$ occurs on server $k_{j+1}$.

Moreover, let ${\alpha = \mathsf{process}_{k_s}(a_0\sigma)}$ if ${s = 0}$, and
let ${\alpha = \mathsf{PAR}_{k_s}(a_0\sigma,j)}$ if ${s > 0}$. Function
$\textsc{FinishMatch}$ is called on server $k_s$ by the induction assumption
for ($\lozenge$), so event $\alpha$ occurs on server $k_s$.

Now note that the set $I_{k_{j+1}}\cap \{f\mid \term{f}{\pi}=r\}$ is not empty
and let $f_{r,\pi}$ be the first fact in the in the set that is added to
$k_{j+1}$ and consider the set the set $D$ constructed in
line~\ref{finish-match-D-n} for $f_{r,\pi}$ (from here on \textit{update set}).
By definition,the partial occurrences for resource $r$ at position $\pi$ in the
$\mathsf{FCT}$ message for $f_{r,\pi}$ cannot contain $k_{j+1}$, therefore the
occurrences of $r$ are added to $D$. Let $D_r$ be the contribution of $r$ to
$D$; it cannot be empty because $f_{r,\pi}$ was derived at some server and it
is either the case that $D_r$ contains $k_s$ or not.

First, assume $k_s\in D_r$, then event $\mathsf{FCT}_{k_s}(f_{r,\pi})$ occurs
on server $k_s$. This event updates occurrences for $r$ on $k_s$, so for
${k_{j+1} \not\in\mu_{k_s,\pi}(r)}$ to hold, ${\alpha \rightsquigarrow
\mathsf{FCT}_{k_s}(a_{j+1}\sigma)}$ must hold. But then, regardless of how
$\alpha$ is defined, Lemma~\ref{lemma:ccc} implies $T(a_0\sigma) <
T(a_{j+1}\sigma)$, which contradicts our assumption that ${T(a_{j+1}\sigma)
\leq T(a_0\sigma)}$ holds.

Otherwise, we know that the presence of the resource $r$ in $I_{k_s}$ had not
yet been fully notified when the the partial mappings for the argument $r$ of
$f_{r,\pi}$ were created. Let $f'_r$ be the first added fact to $I_{k_s}$ that
contains $r$ in any position, and its update set $D'$. We apply
Lemma~\ref{lemma:updateset} on $f_{r,\pi}$ and $f'_r$ to find an element in
$k'\in D_r\cap D'_r$ where both $\mathsf{FCT}_{k'}(f_{r,\pi})$ and
$\mathsf{FCT}_{k'}(f'_r)$ occur. If
$\mathsf{FCT}_{k'}(f_{r,\pi})\rightsquigarrow \mathsf{FCT}_{k'}(f'_r)$, then
$k_{j+1}$ is added to $\mu_{k',\pi}$ at line
\ref{process-message-fact-add-occurrence} when the first message is processed,
and then to $\lambda_{\pi}$ at line \ref{process-message-fact-add-occurrence}
when the second message is processed. This event is then followed by
$\mathsf{FCT}_{k_s}(f'_r)$ and we proceed as done with
$\mathsf{FCT}_{k_s}(f_{r,\pi})$ above to derive $T(a_0\sigma) < T(f'_r)$ from
Lemma~\ref{lemma:ccc}. But $f'_r$ is the first appearance of $r$ in server
$k_s$, hence $T(a_0\sigma) < T(a_s\sigma)$ which contradicts our assumption
that $T(a_s\sigma) < T(a_0\sigma)$ holds.

If $\mathsf{FCT}_{k'}(f'_r) \rightsquigarrow \mathsf{FCT}_{k'}(f_{r,\pi})$,
then $k_s$ is added to $\mu_{k',\pi}$ at line
\ref{process-message-fact-add-occurrence} when the first message is processed,
and then to $D$ at line \ref{process-message-fact-extend-D} when the second
message is processed, so $\mathsf{FCT}_{k_s}(f_{r,\pi})$ eventually happens and
we conclude like above.

In summary, we proved that ${\mathsf{PAR}_{k_{j+1}}(a_0\sigma, j+1)}$ occurs on
server $k_{j+1}$ and this concludes the theorem. \qed
\end{proof}

\begin{proof}[Nonrepetition of Derivations]
Assume that $\textsc{processFact}$ considers two facts $f_1$ and $f_2$, both of
which matched the same rule and produce the same substitution $\sigma$. Let
$b_1$ and $Q_1$ be the pivot atom and the annotated query returned in
line~\ref{process-fact-match-rules} when $f_1$ is processed, and let $b_2$ and
$Q_2$ be defined analogously. Thus, ${b_1\sigma = f_1}$ and ${b_2\sigma =
f_2}$. Since each fact is processed only once, atoms $b_1$ and $b_2$ are
distinct. Now w.l.o.g.\ let us assume that $b_1$ occurs before $b_2$ in the
body of the rule; thus, the atom corresponding to $b_2$ in $Q_1$ is annotated
with $\leq$, and the atom corresponding to $b_1$ in $Q_2$ is annotated with
$<$. But then, $f_2$ is not matched by $Q_1$ if ${T(f_1) < T(f_2)}$ holds, and
$f_1$ is not matched by $Q_2$ if ${T(f_1) \geq T(f_2)}$ holds, which
contradicts our assumption that the algorithm repeats inferences. \qed
\end{proof}

\end{document}